\newtheorem{theorem}{Theorem}
\newtheorem{lemma}{Lemma}
\DeclareMathOperator*{\argmax}{arg\,max}
\newcommand{\bR}{\mathbb{R}}
\newcommand{\bZ}{\mathbb{Z}}
\newcommand{\cA}{{\cal A}}
\newcommand{\cU}{{\cal U}}
\newcommand{\cV}{{\cal V}}
\title{Dynamic Fair Division Problem with General Valuations}
\author{
Bo Li$^\dagger$, 
Wenyang Li$^\ddagger$, 
Yingkai Li$^\star$, 
\\ 
boli2@cs.stonybrook.edu, Stony Brook University $^\dagger$ \\
wenl6@student.unimelb.edu.au, Melbourne University$^\ddagger$ \\
yingkaili2023@u.northwestern.edu, Northwestern University $^\star$ 
}
\begin{document}

\maketitle

\begin{abstract}
In this paper, we focus on how to dynamically allocate a divisible resource fairly among $n$ players who arrive and depart over time. 
The players may have general heterogeneous valuations over the resource.
It is known that exact envy-free and proportional allocations may not exist in the dynamic setting \cite{walsh2011online}. 
Thus, we will study to what extent we can guarantee the fairness in the dynamic setting. 
We first design two algorithms which are $O(\log n)$-proportional and $O(n)$-envy-free 
for the setting with general valuations, 
and by constructing the adversary instances such that all dynamic algorithms must be at least $\Omega(1)$-proportional and $\Omega(\frac{n}{\log n})$-envy-free, 
we show that the bounds are tight up to a logarithmic factor. 
Moreover, we introduce the setting where the players' valuations are uniform on the resource but with different demands,
which generalize the setting of \cite{friedman2015dynamic}.
We prove an $O(\log n)$ upper bound and a tight lower bound for this case. 
\end{abstract}

\section{Introduction}
Initiated by the work of \cite{steinhaus1948problem}, 
the fair division problem 
has been widely studied in the literature of 
economics, mathematics and computer science
\cite{dubins1961cut,stromquist1980cut,alon1987splitting,brams1995envy,brams1996fair,robertson1998cake,aziz2016discreten}. 
It mainly considers the problem of fairly allocating a divisible resource 
among a group of players who have different preferences over the resource.

To capture fairness, many solution concepts have been proposed, 
such as  {\em proportionality} and  {\em envy-freeness} \cite{brams1996fair,neyman1946theoreme,varian1974equity,dubins1961cut}.
An allocation is proportional if each player's valuation for his received resource is at least $\frac{1}{n}$ fraction of his valuation for the whole resource, 
where $n$ is the number of players. 
An allocation is envy-free if no player values another player's allocation more than his own \cite{steinhaus1948problem}. 
One of the difficulties in finding allocations which are proportional or envy-free is that 
the resource may not be uniformly structured, such as time and land, 
and different players may hold different valuations over the same part of the resource.

Most of the previous studies have been focused on the static fair division problem, 
which assumes that all players arrive simultaneously.
Recently, the dynamic fair division problem has been considered in  
\cite{walsh2011online,friedman2015dynamic,friedman2017controlled}. 
A real-life application of the dynamic fair division problem is to fairly allocate the resource on a server among different jobs. 
In this case, different jobs arrive and depart at different time, and each job has different values for the resource allocated to it. 
The objective here is to balance the resource allocated to different jobs. 
Unlike the static setting where an envy-free allocation is guaranteed to exist \cite{brams1995envy,su1999rental},
\cite{walsh2011online} shows that envy-free or proportional allocations may not exist when the players arrive one by one over time.
Therefore, a weak version of envy-freeness is discussed in \cite{walsh2011online}, 
where the only constraint is that each player does not envy the allocations of the players who will arrive after him. 
However, this may not be sufficient to capture fairness in reality 
because the players may still observe the allocations of the previous players and envy their allocations. 
For example, an allocation which allocates the whole resource to the first player is considered to be envy-free in \cite{walsh2011online}, 
which is not plausible. 
Thus it is natural to explore to what extent we can approximate the envy-freeness and proportionality in the dynamic setting. 
This problem has been discussed in the literature, 
but they only considered the case where the players have uniform valuations. 
In \cite{friedman2015dynamic}, there is a single resource, 
and in \cite{friedman2017controlled}, there are multiple resources and the players have different demands for each resource. 
The latter one seems a general problem, but it can still be reduced to the problem with a single divisible resource while each player has piecewise linear valuations. 
We prove the~results~for~settings~with valuations more general than \cite{friedman2015dynamic,friedman2017controlled}.


\paragraph{Our Models and Results.}
In this paper, we aim at designing algorithms for the dynamic fair division (DFD) problem
so that the maximum ``dissatisfaction'' among all players is minimized.
We allow the players to have arbitrary heterogeneous yet additive valuations, 
which is a general class of valuations considered in almost all the fair division literature \cite{procaccia2015cake}. 
In our model, different players arrive and depart at different time, 
and the decision maker needs to decide each player's allocation upon his arrival. 
We say an allocation is $\xi$-envy-free if at any time, 
each player's valuation on his allocation is at least $\frac{1}{\xi}$ fraction of his value on any other player's allocation.
An allocation is $\sigma$-proportional if at any time, 
each player's value on his own allocation is at least $\frac{1}{\sigma n}$ fraction of his value on the whole resource.
We refer to $\xi$ and~$\sigma$ as the {\em approximation ratio} of the allocation.

The algorithms proposed in this paper are {\em $\tau$-recallable algorithms} with $\tau \in \bZ^+$,
which means when a new player arrives, the decision maker can recall the resource from at most $\tau$ previous players 
and reallocate it \cite{friedman2015dynamic}. 
If reallocation is not permitted, it is easy to see that the approximation ratio is unbounded,
because for any dynamic algorithm, the adversary can always choose the valuation for the second player 
such that he is only interested in the allocation of the first player. 
In this paper, we will focus on the 1-recallable algorithms
and at the end of each section, 
we will briefly discuss the $\tau$-recallable algorithms for any constant $\tau$, since they basically share the same idea and the same bound.

Our first contribution is to design two dynamic 1-recallable algorithms for the general DFD problem which are $O(n)$-envy-free and $O(\log n)$-proportional. 
Then we prove that for any dynamic 1-recallable algorithm, 
there exists an adversary instance such that the approximation ratio of the algorithm is at least $\Omega(\frac{n}{\log n})$, 
even if all the players' valuations are {\em piecewise uniform} \cite{chen2013truth}. 
Since the $\Omega(1)$ lower bound for proportionality is already proved in \cite{friedman2015dynamic}, all our bounds are tight up to a logarithmic factor.  

Because of the strong lower bound in the general setting, no one can hope to improve the fairness 
for settings with valuations more general than the piecewise uniform valuations. 
Thus, it is natural to discover the set of valuations with better dynamic performance. 
In \cite{friedman2015dynamic}, the authors show that if the players' valuations are uniform, there exist $O(1)$-envy-free algorithms.
In this paper, we generalize their model, by allowing different players to have different demands over the resource. 
When the players have uniform valuations with different demands, an allocation is {\em fair} if
(1) it meets all the players' demands or (2) each player gets at least his demand divided by total demand \cite{ghodsi2011dominant}.
As before, in the static setting, a fair allocation always exists. 
However, in the dynamic setting, the adversary can manipulate the future players to violate the fairness.
We say an allocation is {\em $\eta$-fair} if each player gets $\frac{1}{\eta}$ fraction of his allocated resource in a fair allocation.
We design an $O(\log n)$-fair 1-recallable algorithm and we prove that the bound is tight by constructing an adversary instance such that no 1-recallable algorithm can be better than $\Omega(\log n)$-fair.

\vspace{-5pt}
\paragraph{Additional Related Work.}

The fair division problem with multiple indivisible resources is also a problem widely studied in the literature 
\cite{budish2011combinatorial,procaccia2014fair,amanatidis2015approximation,brandt2016handbook,endriss2017trends}. 
Since the envy-free allocation cannot be guaranteed in this setting, the notion of fairness is captured by different relaxed versions, 
such as envy-free up to one item, which can be found by a round robin procedure \cite{budish2011combinatorial}. 
Another solution concept adopted for indivisible resourses is the maxmin share, 
which can be approximated within a factor of $2/3$ in polynomial time 
if the number of players is a constant \cite{procaccia2014fair}. 
\cite{amanatidis2015approximation} improves the result by designing an algorithm for any number of players which runs in polynomial time and guarantees the same approximation ratio. 
When the valuations of the players are submodular, the maximin fair allocation can be approximated with a factor of 10 \cite{barman2017approximation}.

%

\section{Preliminaries}\label{sec:pre}

\paragraph{Fair Division Problem.}
The fair division problem is to divide and allocate a divisible and heterogeneous resource fairly among~$n$ players, 
where the resource is represented by the real number interval $[0,1]$ 
and the player set is represented by $N=\{1,\cdots,n\}$. 
Each player $i$ has a valuation $v_i$, mapping any subset of the resource to a nonnegative value,
representing $i$'s preference over the resource.
Formally, for any set $I \subseteq [0,1]$, player $i$'s value for taking set $I$ is $v_{i}(I) \in \bR_{+}\cup \{0\}$.
The valuation profile is denoted by $v = \{v_1, \dots, v_n\}$ 
and the set of all possible valuation profiles is denoted by $\cV$.
In this paper, besides the most general valuations, we also consider two special ones.

One widely studied valuation is {\em piecewise uniform} valuation \cite{chen2013truth}.
A player has piecewise uniform valuation if and only if he has uniform valuation over a subset of the resource (which may not be a continuous interval). 
That is, for player $i$, 
the valuation $v_i$ is piecewise uniform if and only if
there exists $I_{i} \subseteq [0,1]$ 
such that $v_{i}(I) = |I \cap I_i|$ for all $I \subseteq [0, 1]$. 
By normalizing each player's valuation to range $[0, 1]$, we have that for any $I \subseteq [0, 1]$, $v_{i}(I) = \frac{|I \cap I_i|}{|I_i|}$. 

Another valuation studied in this paper is called {\em uniform with demand}.
In this model, each player has uniform valuation over the resource 
and he only cares the size of the resources he gets rather than which part of the resource is allocated to him. 
In general, different players may have different demand over the resource, 
and when the allocated resource exceeds the player's demand, 
his valuation for the allocated resource will not increase. 
For player $i$, we denote his demand as $d_{i}\in (0,1]$, 
and his valuation $v_i$ is uniform with demand if and only if 
for any $I \subseteq [0, 1]$, $v_{i}(I)=\min\{|I|, d_i\}$.  
By normalizing each player's valuation to range $[0, 1]$, we have that for any $I \subseteq [0, 1]$, $v_{i}(I)=\min\{\frac{|I|}{d_{i}}, 1\}$. 
In our paper, we adopt the normalized form of the valuations, but all the results hold for the valuations without normalization.

An {\em allocation} is denoted by a partition of $[0, 1]$, $A=(A_{0},A_{1}, \cdots, A_{n})$
where $A_{0}$ is the unallocated resource and $A_{i}$ is the resource allocated to player $i$ for any $i \in N$.
The set of all such possible partitions of $[0,1]$ is denoted by $\cU$.
A fair division algorithm (or an allocation rule) $\cA$ is a mapping from a valuation profile to an allocation,
i.e., $\cA:\cV \to \cU$.

Two widely adopted fairness solution concept are {\em proportionality} and {\em envy-freeness}. 
An allocation $A$ is proportional if and only if each player gets at least $\frac{1}{n}$ fraction of the whole resource, 
and an allocation $A$ is envy-free if and only if each player will not envy any other player's allocation. 
In this paper, we care more about the relaxed version of them.
Formally, we say an allocation $A$ is $\sigma$-proportional if $v_{i}(A_{i})\geq \frac{1}{\sigma n}v_{i}([0,1])$ for any player $i$,
and an allocation $A$ is $\xi$-envy-free if $v_{i}(A_{i})\geq \frac{1}{\xi}v_{i}(A_{j})$ for any player $i$ and $j$.

When players' valuations are uniform with demand, 
a fairness solution concept stronger than proportionality is required here 
because for a proportional allocation $A^P$, 
it only guarantees that for each player $i$, 
$v_i(A^P_i) \geq \frac{v_i([0, 1])}{n} = \frac{1}{n}$. 
However, in the static setting there exists an allocation $A$ such that 
each player~$i$ gets $\frac{d_i}{\max\{d, 1\}}$ fraction of the whole resource, 
where $d=\sum_{j\in N}d_{j}$ is the total demand of the players, 
and player~$i$'s value is guaranteed to be $v_i(A_i) = \frac{1}{\max\{d, 1\}}$. 
Since the total demand $d$ might be significantly smaller than $n$, 
$v_i(A_i)$ might be significantly larger than $v_i(A^P_i)$. 
Therefore, a stronger notion of fairness is defined as follows for uniform with demand valuations  \cite{ghodsi2011dominant}. 
Formally, an allocation $A$ is {\em fair} 
if $v_i(A_{i}) \geq \frac{1}{\max\{d, 1\}}$, 
and $A$ is $\eta$-fair if $v_i(A_{i}) \geq \frac{1}{\eta \cdot \max\{d, 1\}}$.

\vspace{-5pt}
\paragraph{Dynamic Fair Division Problem.}
Now we extend the fair division problem to the dynamic setting, where player $i$ arrives at time $t_i$ and departs at time $t^d_i$, with $t^d_i > t_i$. 
Without loss of generality, we have $0 < t_1 \leq t_2 \leq \dots \leq t_n$, 
where $n$ is the maximum number of players that could possibly arrive and is given to the algorithm as an input. 
However, the algorithm does not know the exact number of arriving players, 
or the players’ valuations until their arrival. 
The algorithm needs to allocate the resource to each player when he arrives 
without knowing future events, including his departure time. 
In this paper, we will design our algorithms and state our results for the arrival only model, 
where $t^d_1 = t^d_2 = \dots = t^d_n > t_n$. 
However, all our lower bounds and upper bounds can be directly applied for settings where players have arbitrary departure time with the same approximation ratio. 
Generally speaking, the lower bounds holds since the arrival only model is a special case of the model with departures. 
Thus the adversary instances constructed for the arrival only model can be directly applied to the model with departures to get the same bound. 
For the upper bound, we can simply recall the resource of the departed player and leave it unallocated. 
By going through our analysis for the arrival only model, we can show that all the bounds carry through in the model with departures.


Formally, the allocation of the algorithm at time $t_i$ is denoted by $A^i=(A^{i}_{0},A^{i}_{1},\cdots,A^{i}_{i})$
and the total output from time $0$ to time $t_n$ is denoted by $A = (A^{i})_{i \in \{0, 1, \cdots, n \}}$ where $A^{0} = \{A^{0}_{0}\}$ and $A^{0}_{0}=[0,1]$.
An algorithm is called $\tau$-recallable if for any $i \in [n-1]$, 
there exists $s \leq \min\{\tau, i\}$ and a set of players $S=\{i_{1},\cdots, i_{s}\}\subseteq [i]$
such that $A^{i+1}_{j} = A^{i}_{j}$ for all $j \in [i] \backslash S$, 
and $A^{i+1}_{j} \subseteq A^{i}_{j}$ for all $j\in S$.

In the following we extend the definition for fairness to the dynamic setting. 
An allocation $A$ is $\sigma$-proportional if 
$$\forall i, \forall j \leq i, v_{j}(A^{i}_j) 
\geq \frac{1}{\sigma \cdot i} v_{j}([0,1]);$$
and an allocation $A$ is $\xi$-envy-free if 
$$\forall i, \forall j,j' \leq i, v_{j}(A^i_j) 
\geq \frac{1}{\xi} v_{j}(A^i_{j'}).$$
For the uniform with demand valuations in the dynamic setting, 
an allocation $A$ is $\eta$-fair if 
$$\forall i, \forall j \leq i, 
v_j(A^i_{j}) \geq \frac{1}{\eta \cdot \max\{\sum_{l \leq i} d_l, 1\}}.$$

In this paper, we say an algorithm $\cA$ is $\sigma$-proportional, $\xi$-envy-free or $\eta$-fair for valuation space $\cV$ 
if for any valuation profile $v\in \cV$, $\cA(v)$ is $\sigma$-proportional, $\xi$-envy-free or $\eta$-fair. 
Note that, since we do not assume that the resource must be allocated to any player,
a trivial envy-free algorithm is that everyone gets nothing.
Such an algorithm is called {\em empty} in this paper.

\paragraph{Remark.} In this paper, we only consider deterministic algorithms 
because there exists a trivial randomized algorithm which is proportional and envy-free in expectation in the dynamic setting \cite{chen2013truth}. 
This randomized algorithm is also fair for uniform with demand valuations.

\section{Dynamic Fair Division (DFD) Problem}
\label{sec:uniform}

First we state the results for proportionality. 
In Theorem 6.1 of \cite{friedman2015dynamic}, no $1$-recallable algorithm can be better than $(2\ln 2)$-proportional even if 
the players' valuations are uniform, i.e., $v_i(I) = |I|, \forall i\in [n]$. 
In the following, we design the $1$-recallable algorithm $\cA_{DFD}^1$, 
defined in Algorithm \ref{alg:dfdpu:proportional}, 
which is $O(\log n)$-proportional for the DFD problem with general valuations. 
Roughly speaking, when player $i$ arrives, 
algorithm $\cA_{DFD}^1$ divides the previous players' allocated resource into $2i (3+\ln i)$ subsets
and lets player~$i$ choose his favorite bundle of resource from one player, 
where the size of the bundle depends on the current approximation ratio of any player~$j$ who arrived before player $i$. 
The performance of algorithm $\cA_{DFD}^1$ is formally analyzed in Theorem \ref{thm:prop}. 

\begin{algorithm}[htbp]
  \caption{\hspace{-3pt} 1-Recallable Algorithm $\cA_{DFD}^1$}
 \label{alg:dfdpu:proportional}
  \begin{algorithmic}[1]
\REQUIRE  A sequence of players $N=\{1,\cdots,n\}$ arriving and departing along time. 
\STATE Initially, $A^{i}_{j}=\emptyset$ for all $1\leq i \leq n$ and $0\leq j\leq i$. 
\STATE When the first player arrives, setting $A^1_1 = [0,1]$. 
\FOR{any arriving player $i > 1$}

\STATE 
		$\sigma = \lfloor 2i (3+\ln i) \rfloor$. 

\FOR{$0 < j < i$}
\STATE Partition $A^{i-1}_j$ into $\sigma$ sets $\{A^{i-1}_{j,1},\cdots,A^{i-1}_{j, \sigma} \}$ 
with $v_{j}(A^{i-1}_{j,1})=\cdots=v_{j}(A^{i-1}_{j, \sigma})$.
\STATE $\sigma_j = \lceil \frac{v_j([0,1])}{v_j(A^{i-1}_j)} \rceil$.
\ENDFOR

\STATE $(j^{*},S^{*}) =
\arg\max\limits_{
	\substack{{j < i, S\subseteq [\sigma],} \\ {|S| = \sigma -\sigma_{j}}}
}
\{ \sum _{k\in S}v_{i}(A^{i-1}_{j,k})\} $.

\STATE $A^i_i = \cup_{k\in S^{*}}A^{i-1}_{j^{*},k}$.  
\STATE $A^i_{j^{*}} = \cup_{k\notin S^{*}} A^{i-1}_{j^{*},k}$. \label{step:dfd:proportional:1}
\STATE $A^i_j = A^{i-1}_j, \forall j \in [i-1] - \{j^{*}\}$. 

\ENDFOR

\ENSURE Allocation $A=(A^{i}_{j})_{0 \leq i \leq n, 0 \leq j \leq i}$.
\end{algorithmic}
\end{algorithm}

\begin{theorem}\label{thm:prop}
For any $n \geq 1$, 1-recallable algorithm $\cA_{DFD}^1$ is $O(\log n)$-proportional for the DFD problem with $n$ players. 
\end{theorem}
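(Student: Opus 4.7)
I plan to prove a stronger invariant by strong induction on $i$. Define $L_j^i$ as the most recent round in which player $j$'s allocation changed---the last round $j$ was attacked, or $L_j^i=j$ if $j$ has not been attacked since arriving. I aim to show that for every $i$ and every $j\le i$,
\begin{equation}\label{eq:inv}
v_j(A^i_j) \;\ge\; \frac{v_j([0,1])}{\sigma^{L_j^i}}, \qquad \sigma^t := \lfloor 2t(3+\ln t)\rfloor.
\end{equation}
Since $L_j^i\le i$ gives $\sigma^{L_j^i}\le \sigma^i\le 2i(3+\ln n)$, this invariant directly implies $O(\log n)$-proportionality with approximation ratio $2(3+\ln n)$.

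At each inductive step, three cases cover all players at time $i$. For $j<i$ not attacked in round $i$, neither $A^i_j$ nor $L_j^i$ changes, so \eqref{eq:inv} is inherited from round $i-1$. For the attacked player $j^*$, the algorithm's choice $\sigma_{j^*}=\lceil v_{j^*}([0,1])/v_{j^*}(A^{i-1}_{j^*})\rceil$ yields $v_{j^*}(A^i_{j^*}) = (\sigma_{j^*}/\sigma^i)v_{j^*}(A^{i-1}_{j^*}) \ge v_{j^*}([0,1])/\sigma^i$, matching \eqref{eq:inv} with $L_{j^*}^i=i$. For the newly arrived player $j=i$ I need $v_i(A^i_i)\ge v_i([0,1])/\sigma^i$. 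Since each of the top $\sigma^i-\sigma_{j'}$ pieces of $j'$'s allocation is at least the average, $v_i(A^i_i)\ge\max_{j'<i}(\sigma^i-\sigma_{j'}) v_i(A^{i-1}_{j'})/\sigma^i$; applying the elementary inequality $\max_{j'}a_{j'}b_{j'}\ge(\sum_{j'} b_{j'})/(\sum_{j'} 1/a_{j'})$ with $a_{j'}=\sigma^i-\sigma_{j'}$, $b_{j'}=v_i(A^{i-1}_{j'})$, and using $\sum_{j'<i}v_i(A^{i-1}_{j'})=v_i([0,1])$ since no resource is left unallocated, this case reduces to proving
$$ \sum_{j'<i}\frac{1}{\sigma^i-\sigma_{j'}} \;\le\; 1. $$

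The main obstacle is this harmonic-type sum. By \eqref{eq:inv} at round $i-1$ and the integrality of $\sigma^t$, we have $\sigma_{j'}\le\sigma^{L_{j'}^{i-1}}$. The key combinatorial observation is that each round $t$ creates at most two players with $L$-value equal to $t$---the arriving player $t$ and the attacked player $j^*_t$---and subsequent attacks only overwrite these values upward, so at most two $j'$ share any given $L$-value at time $i-1$. This gives $\sum_{j'}1/(\sigma^i-\sigma_{j'})\le 2\sum_{t=1}^{i-1}1/(\sigma^i-\sigma^t)$, and it suffices to show $\sum_{t=1}^{i-1} 1/(\sigma^i-\sigma^t)\le 1/2$. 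I would establish this by using the identity $\sigma^i-\sigma^t = 2[(i-t)(3+\ln i)+t\ln(i/t)]$ and splitting at $t=\lceil i/2\rceil$: the upper-tail bound $\sigma^i-\sigma^t\ge 2(i-t)(3+\ln i)$ yields a shifted harmonic sum at most $(\ln(i/2)+1)/(2(3+\ln i))$, while the lower-tail bound $\sigma^i-\sigma^t\ge i(3+\ln(2i))$ contributes at most $1/(2(3+\ln(2i)))$; a short algebraic verification confirms these sum to at most $1/2$ for every $i\ge 2$. The specific calibration $\sigma^t=2t(3+\ln t)$ together with the pigeonhole factor of two on $L$-values is precisely what makes the constants close up so that the final bound is one.
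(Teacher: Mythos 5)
Your proposal is correct and follows essentially the same route as the paper's proof: the same induction tracking each player's individual ratio via $\sigma_j$, the same key observation that each round resets the ratio of at most two players (yielding the factor-two pigeonhole over rounds), and the same harmonic-type sum $\sum_t 1/(\sigma^i-\sigma^t)$ controlled by the calibration $\sigma^t=\lfloor 2t(3+\ln t)\rfloor$. Your explicit invariant indexed by the last-touched round and the max-versus-harmonic-mean inequality merely make rigorous the "worst case" step that the paper asserts informally.
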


\begin{proof}

In the following, we prove the theorem by induction.
When there is only $1$ player, algorithm $\cA_{DFD}^1$ is proportional since the player gets the whole resource. 
Assuming that when there are $i-1$ players, algorithm $\cA_{DFD}^1$ is $2 (3+\ln (i-1))$-proportional, 
we show that when player $i$ arrives at time $t_i$, algorithm $\cA_{DFD}^1$ is $2 (3+\ln i)$-proportional. 

First, we show that for any $j < i$, player $j$ is $2 (3+\ln i)$-proportional at time $t_i$.
Note that at least $\sigma_{j}$ elements from $\{A^{i-1}_{j,1},\cdots,A^{i-1}_{j, \lfloor 2i (3+\ln i) \rfloor} \}$ is allocated to player $j$
and player $j$ has the same value for them. 
Thus,
\begin{equation}
\frac{v_j(A^{i}_j)}{v_j([0,1])} 
\geq \frac{\sigma_j \cdot v_j(A^{i-1}_j)}{\lfloor 2i (3+\ln i) \rfloor \cdot v_j([0,1])}
\geq \frac{1}{2i (3+\ln i)}.
\end{equation}

Next, note that for player $i$, the worst case happens when his value for the resource allocated to player $j$ is 
$\frac{\rho_j \cdot v_i([0,1])}{\sum_{j \leq i-1} \rho_j}$, 
where $\rho_j = \frac{\lfloor 2i (3+\ln i) \rfloor}{\lfloor 2i (3+\ln i) \rfloor - \sigma_j}$. 
In any other cases, by choosing his favorite bundle, player $i$'s value will increase. 
Thus, 
\begin{eqnarray*}
&& \frac{v_i([0,1])}{v_i(A^{i}_i)} 
\leq \sum_{1 \leq j \leq i-1} 
\frac{\lfloor 2i (3+\ln i) \rfloor}{\lfloor 2i (3+\ln i) \rfloor - \sigma_j} \\
&\leq& 2 \sum_{1 \leq x \leq \lceil \frac{i-1}{2} \rceil} 
\frac{\lfloor 2i (3 + \ln i) \rfloor}{\lfloor 2i (3 + \ln i) \rfloor - \lceil 2(i-x) (3 + \ln (i-x)) \rceil} \\
&\leq& 2 \sum_{1 \leq x \leq \lceil \frac{i-1}{2} \rceil} 
\frac{2i (3 + \ln i)}{2i (3 + \ln i) - 2(i-x) (3 + \ln i) - 2} \\
&=& 2 i \sum_{1 \leq x \leq \lceil \frac{i-1}{2} \rceil} 
\frac{1}{x - \frac{2}{3 + \ln i}} 
\leq 2 i \left (3 + \sum_{2 \leq x \leq \lceil \frac{i-1}{2} \rceil} \frac{1}{x - \frac{2}{3 + \ln i}} \right) \\
&<& 2 i \left (3 + \sum_{1 \leq x \leq \lceil \frac{i-1}{2} \rceil - 1} \frac{1}{x} \right) 
< 2i (3 + \ln i)
\end{eqnarray*}
The second inequality holds by carefully analyzing the value $\sigma_j$ for each player $j$ who arrived before player $i$. 
Note that at any time $t_j$, the algorithm can only make the assignment 
such that the approximation ratio of the new arriving player and the player whose resource is recalled 
is $2j (3+\ln j)$, 
while the approximation ratio for all other players remain unchanged. 
Thus when player $i$ arrives, 
there exist at least $i+1-2x$ players with approximation ratio less than $2(i-x) (3+\ln (i - x))$. 
By properly eliminating the rounding in the term, the third inequality holds. 
The fourth inequality holds because the first term in the summation is at most 3. 
The last two inequalities hold trivially. 
Therefore, by induction, algorithm $\cA_{DFD}^1$ is $O(\log n)$-proportional and Theorem \ref{thm:prop} holds.
\end{proof}

Next we show how to design the $O(n)$-envy-free algorithm $\cA_{DFD}^2$, which is quite similar to algorithm $\cA_{DFD}^1$. 
Briefly speaking, $\cA_{DFD}^2$ allocates the whole resource to the first player. 
Then, when player $i$ arrives, 
for each player $j<i$, $\cA_{DFD}^2$ uniformly divides the resource allocated to player $j$ into $i$ equal subsets 
and let player~$i$ chooses his favorite set. 
The formal description is omitted here due to the space limit. 

%
%
%
%
%
%
%

\begin{theorem}\label{thm:envy:up}
For any $n \geq 1$, 1-recallable algorithm $\cA_{DFD}^2$ is $O(n)$-envy-free for the DFD problem with $n$ players. 
\end{theorem}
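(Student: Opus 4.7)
The plan is to prove by induction on $i$ that the allocation at time $t_i$ produced by $\cA_{DFD}^2$ is $i$-envy-free; evaluating at $i=n$ then yields the claimed $O(n)$ bound. The base case $i=1$ is immediate since one player holds the whole resource. For the inductive step, suppose the allocation at time $t_{i-1}$ is $(i-1)$-envy-free. When player $i$ arrives, $\cA_{DFD}^2$ partitions each $A^{i-1}_j$ into $i$ pieces of equal $v_j$-value and gives player $i$ the piece (across all $j$ and all $i$ pieces per $j$) with largest $v_i$-value; let $j^{*}$ denote the unique player whose resource is recalled. This respects the $1$-recallable constraint since only $A^{i-1}_{j^{*}}$ is altered.

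The inductive step breaks into three types of envy to bound at time $t_i$. \textbf{Old vs.\ old} (players $j,k<i$): if $j\neq j^{*}$ then $A^i_j=A^{i-1}_j$ while $A^i_k\subseteq A^{i-1}_k$, so $v_j(A^i_k)/v_j(A^i_j)$ only decreases compared to $t_{i-1}$; if $j=j^{*}$ then $v_{j^{*}}(A^i_{j^{*}})=\tfrac{i-1}{i}v_{j^{*}}(A^{i-1}_{j^{*}})$ by construction, so the envy ratio scales by at most $\tfrac{i}{i-1}$, giving $\tfrac{i}{i-1}\cdot(i-1)=i$. \textbf{Old vs.\ new} (player $j<i$ envying player $i$): since $A^i_i\subseteq A^{i-1}_{j^{*}}$, for $j\neq j^{*}$ we get $v_j(A^i_i)\le v_j(A^{i-1}_{j^{*}})\le(i-1)v_j(A^{i-1}_j)=(i-1)v_j(A^i_j)$; for $j=j^{*}$, the piece has $v_{j^{*}}$-value exactly $\tfrac{1}{i}v_{j^{*}}(A^{i-1}_{j^{*}})$ while $v_{j^{*}}(A^i_{j^{*}})=\tfrac{i-1}{i}v_{j^{*}}(A^{i-1}_{j^{*}})$, so the ratio is $\tfrac{1}{i-1}\le i$. \textbf{New vs.\ old} (player $i$ envying $k<i$): the $i$ pieces of $A^{i-1}_k$ have average $v_i$-value $v_i(A^{i-1}_k)/i$, and player $i$ picks the piece of global maximum $v_i$-value, so $v_i(A^i_i)\ge v_i(A^{i-1}_k)/i$ for every $k<i$; since $v_i(A^i_k)\le v_i(A^{i-1}_k)$, this gives $v_i(A^i_k)\le i\cdot v_i(A^i_i)$. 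Combining the three cases, $\cA_{DFD}^2$ is $i$-envy-free at time $t_i$, closing the induction.

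The only conceptual subtlety is the \textbf{new vs.\ old} case: the partition of $A^{i-1}_k$ is balanced under $v_k$ rather than $v_i$, so a naive ``equal pieces'' argument would fail. The fix is the averaging observation — player $i$'s favorite piece across the entire partition must beat the $v_i$-average $v_i(A^{i-1}_k)/i$ — which uniformly handles all $k<i$ without needing to know which $k$ equals $j^{*}$. Everything else reduces to tracking the single factor $\tfrac{i}{i-1}$ introduced by recalling one of the $i$ equal pieces from $j^{*}$; since it never compounds across arrivals (only $j^{*}$ incurs it at step $i$, and the inductive ratio being paid against is $i-1$), no amplification beyond linear in $n$ occurs. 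I do not anticipate any further obstacle, and the argument extends verbatim to general constant $\tau$-recallable variants by partitioning into $i$ pieces per recalled player.
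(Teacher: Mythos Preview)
Your proof is correct and follows essentially the same approach as the paper's: induction on $i$ with the same three envy comparisons (new vs.\ old via the averaging/favorite-piece argument, old vs.\ old via the $\tfrac{i-1}{i}$ retention factor, and old vs.\ new via $A^i_i\subseteq A^{i-1}_{j^*}$). The only cosmetic difference is that the paper uses the uniform bound $v_j(A^i_j)\ge\tfrac{i-1}{i}v_j(A^{i-1}_j)$ for all $j<i$ rather than splitting into the cases $j=j^*$ and $j\neq j^*$, but the substance is identical.
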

\vspace{-5pt}

\begin{proof}
Again we prove by induction.
When player 1 arrives, $A^1_1$ is exactly $[0, 1]$.
That is, $\cA_{DFD}^2$ is envy-free. 
Assume that for player $i \geq 2$, $\cA_{DFD}^2$ is $(i-1)$-envy-free at time $t_{i-1}$. 
Now let us prove that all $i$ players are $i$-envy-free at time $t_i$ 
by analyzing their value at time $t_i$. 

First, we prove that player $i$ is $i$-envy-free at time $t_i$. 
Assume that the allocation of player $j$, $A^{i-1}_j$, is partitioned into~$i$ sets $\{A^{i-1}_{j,1},\cdots,A^{i-1}_{j,i} \}$. 
Since player $i$ is allocated with his favorite set $A^{i-1}_{j^{*},k^{*}} \in \{A^{i-1}_{j,k}\}_{j<i,k\leq i}$, 
for any $j < i$,
\begin{equation}\label{eq:dfd:up:envy:1}
v_i(A^i_i)\geq v_{i}(A^{i-1}_{j^{*},k^{*}}) \geq \frac{1}{i} v_i(A^{i-1}_j) \geq \frac{1}{i} v_i(A^{i}_j).
\end{equation}

Next, we show that for any $j \in [i-1]$, player $j$ is $i$-envy-free at time $t_i$.
By induction hypothesis, for any $j, j' \in [i-1]$, 
$$v_j(A^{i-1}_j) \geq \frac{1}{i-1} v_j(A^{i-1}_{j'}).$$ 
Note that $A^{i-1}_j$ is partitioned into $i$ sets with equal values, 
and at most one of them is allocated to player $i$.
Hence, for any $j, j' \in [i-1]$,
\begin{equation}\label{eq:dfd:up:envy:2}
v_j(A^{i}_j) \geq \frac{i-1}{i} v_j(A^{i-1}_j) 
\geq \frac{1}{i} v_j(A^{i-1}_{j'})
\geq \frac{1}{i} v_j(A^{i}_{j'}).
\end{equation} 

Since the allocation of player $i$ at time $t_i$ is also a subset of player $j^*$'s allocation at time $i-1$, 
according to Inequality \ref{eq:dfd:up:envy:2}, we have 
\begin{equation}\label{eq:dfd:up:envy:3}
v_j(A^{i}_j) \geq \frac{1}{i} v_j(A^{i-1}_{j^{*}}) \geq \frac{1}{i} v_j(A^{i}_i).
\end{equation}

Combining Inequalities \ref{eq:dfd:up:envy:1}, \ref{eq:dfd:up:envy:2} and \ref{eq:dfd:up:envy:3}, 
all players are $i$-envy-free at time $t_i$. 
By induction, algorithm $\cA_{DFD}^2$ is $n$-envy-free.  
Thus finishes the proof of Theorem \ref{thm:envy:up}.
\end{proof}

We have designed an $1$-recallable algorithm which is $O(n)$-envy-free for the DFD problem with general valuations. 
Next we show that this bound is almost tight by proving that 
all $1$-recallable algorithms must be at least $\Omega(\frac{n}{\log n})$-envy-free, 
even if the players' valuations are piecewise uniform (DFDPU). 
Formally, we have the following theorem. 


\begin{theorem}\label{thm:envy:low}
For any 1-recallable algorithm $\cA$ (except empty algorithm) for the DFDPU problem, 
$\cA$ is $\Omega(\frac{n}{\log n})$-envy-free with $n$ players.
\end{theorem}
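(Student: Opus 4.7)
The plan is to construct an adaptive adversary that reveals piecewise-uniform valuations based on the algorithm's current allocation. I start with $I_1 = [0,1]$, forcing player $1$ to receive some allocation $A^1_1$ (otherwise the algorithm is empty). Thereafter, when player $i$ arrives, the adversary identifies the player $j^{\star}$ with the largest current allocation $A^{i-1}_{j^\star}$ and chooses $I_i$ to be (a subset of) $A^{i-1}_{j^\star}$. Because $I_i$ is entirely contained in $A^{i-1}_{j^\star}$, the 1-recall constraint leaves the algorithm exactly two options: either give player $i$ a negligible portion of $I_i$---in which case $i$'s value for $A^i_{j^\star}$ dwarfs $i$'s value for $A^i_i$ and envy immediately exceeds $\xi$---or recall from $j^\star$ and transfer to player $i$ at least a $\Theta(1/\xi)$ fraction of $I_i$.

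For the quantitative analysis, I will track the potential $\Phi_i = \sum_{j \leq i}\log_2(1/|A^i_j|)$, defined over players with nonempty allocations (if some player has an empty allocation while another has a positive one, the adversary can already realize infinite envy for that player, so WLOG all allocations are nonempty). Using $\sum_j |A^i_j| \leq 1$, Jensen's inequality yields the baseline $\Phi_n \geq n\log_2 n$. For the upper bound on $\Phi$'s growth, each arrival splits only one allocation $A^{i-1}_{j^\star}$ into two pieces $A^i_{j^\star}$ and $A^i_i$ (plus possibly some unallocated residual). Envy-freeness between $j^\star$ and $i$ combined with $I_i \subseteq A^{i-1}_{j^\star}$ pins the split ratio inside $[1/(\xi+1),\,\xi/(\xi+1)]$, and a direct calculation of $\Delta\Phi_i=\log_2\!\bigl(|A^{i-1}_{j^\star}|/(|A^i_{j^\star}|\cdot |A^i_i|)\bigr)$ gives $\Delta\Phi_i = O(\log_2\xi)$ per step. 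Summing over $n$ steps gives $\Phi_n = O(n\log_2 \xi)$, and matching against $\Phi_n \geq n\log_2 n$ forces $\xi = \Omega(n/\log n)$.

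The main obstacle is ensuring the per-step potential increment is genuinely $O(\log \xi)$ in all cases: if the algorithm pulls resource from the unallocated pool (instead of recalling from $j^\star$) or if $j^\star$'s allocation is not fully contained in prior measure, the clean calculation can break. I plan to address this by having the adversary maintain the invariant that the bulk of $[0,1]$ is allocated---whenever the unallocated mass exceeds a constant fraction, the adversary instead places $I_i$ entirely in the unallocated region, which can happen only $O(\log n)$ times before the unallocated mass is exhausted and thus contributes negligibly to the potential. A secondary subtlety is arguing that the adversary's target $j^\star$ always has allocation large enough to support the construction (else the algorithm has already produced small, uneven allocations, which directly gives envy); I expect this to follow by induction on the step index together with the pigeonhole property $\sum_j |A^i_j| \leq 1$.
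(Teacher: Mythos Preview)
Your potential argument contains a genuine error, and the underlying adversary does not work. Write $a=|A^{i-1}_{j^\star}|$, $|A^i_{j^\star}|=pa$, $|A^i_i|=qa$; your own formula gives
\[
\Delta\Phi_i \;=\; \log_2\!\frac{a}{(pa)(qa)} \;=\; \log_2\frac{1}{p}+\log_2\frac{1}{q}+\log_2\frac{1}{a}.
\]
The envy constraints between $i$ and $j^\star$ do pin $p,q\in[\tfrac{1}{\xi+1},\tfrac{\xi}{\xi+1}]$, so the first two terms are $O(\log\xi)$. But the third term $\log_2(1/a)$ is \emph{not} controlled by $\xi$: since $j^\star$ is the largest of $i-1$ pieces summing to at most $1$, one only has $a\ge 1/(i-1)$, hence $\log_2(1/a)=\Theta(\log i)$ in the balanced case. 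Summing yields $\Phi_n=\Theta(n\log n)+O(n\log\xi)$, which matches the Jensen lower bound $\Phi_n\ge n\log_2 n$ without forcing $\xi$ to be large.

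Concretely, the ``balanced halving'' algorithm---always split the targeted allocation $A^{i-1}_{j^\star}$ in half---defeats your adversary with $\xi\le 2$. All allocations stay within a factor $2$ in size, and since each player $i$ only values $I_i=A^{i-1}_{j^\star}$, player $i$ can only possibly envy players whose final pieces lie inside $I_i$; those pieces all have size within a factor $2$ of $|A^n_i|$. So your instance does not witness $\Omega(n/\log n)$ envy.

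The paper's adversary is crucially different: every new player $i\ge 2$ desires exactly $A^{i-1}_1$, i.e.\ \emph{the same} player~$1$ is targeted every round. This forces the single recall to always come from player~$1$, so $|A^{i}_1|\le \tfrac{\xi}{\xi+1}\,|A^{i-1}_1|$ for every $i$, while player~$2$'s piece is never touched again. The tension between player~$1$'s envy of player~$2$ at time $n$ and each $i$'s envy of player~$1$ at time $t_i$ gives $(1-\tfrac{1}{1+\xi})^{n-2}\ge 1/\xi^2$, hence $\xi=\Omega(n/\log n)$. Targeting the \emph{largest} piece, as you do, lets the algorithm spread the recalls across players and avoid this geometric collapse.
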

\begin{proof}
Assume there exists an algorithm $\cA$ which is $\xi$-envy-free and not empty. 
Consider the following adversary instance where for player $1$, 
$v_1(I) = |I|$ 
and for any player $i \geq 2$, 
$v_i(I) = \frac{|I \cap A^{i-1}_1|}{|A^{i-1}_1|}$. 
That is, the first player wants the whole resource 
and each new arriving player only wants the resource allocated to the first player by $\cA$ at time $t_{i-1}$. 
Since $\cA$ is $\xi$-envy-free and not empty, we must have $A^i_1 \neq \emptyset, \forall i \leq n$, 
and the adversary instance is well defined. 

First we consider the envy-freeness for player $i \geq 2$.
Since the resource that player $i$ values is exactly the resource allocated to player $1$ at time $t_{i-1}$,
to ensure that player $i$ is $\xi$-envy-free at time $t_i$, 
the algorithm must recall the resource from player $1$'s 
and reallocate at least $\frac{1}{1+\xi}$ fraction of the resource to player $i$. 
Therefore, 
$$v_1(A^{i}_1) \leq (1-\frac{1}{1+\xi}) v_1(A^{i-1}_1),$$
for any $i \geq 2$. 
Accordingly, 
\begin{equation}\label{eq:dfdpu:lb:1}
v_1(A^{n}_1) \leq (1- \frac{1}{1+\xi})^{n-2} v_1(A^{2}_1).
\end{equation}

When all $n$ players have arrived, to ensure that player $1$ is $\xi$-envy-free, we have 
\begin{eqnarray}
&& v_1(A^{n}_1) \geq \frac{1}{\xi} v_1(A^{n}_2) 
= \frac{1}{\xi} v_1(A^2_2) \nonumber \\
&=& \frac{1}{\xi |A^1_1|} v_2(A^2_2)  
\geq \frac{1}{\xi^2 |A^1_1|} v_2(A^2_1) = \frac{1}{\xi^2} v_1(A^2_1).\label{eq:dfdpu:lb:2}
\end{eqnarray} 

The first and the second inequality holds due to the $\xi$-envy-freeness of player $1$ at time $t_n$ 
and player $2$ at time $t_2$. 
The first equality holds because player $1$ is always the player who needs to be recalled. 
Therefore, 
$A^{i}_2 = A^2_2$ for any $2\leq i\leq n$.
The second and the last equality holds because by the construction of the valuations of player $1$ and player $2$, 
$v_1(I) = \frac{1}{|A^1_1|} v_2(I)$ for any $I \subseteq A^1_1$. 

Combining inequalities \ref{eq:dfdpu:lb:1} and \ref{eq:dfdpu:lb:2}, 
we have
$$(1-\frac{1}{1+\xi})^{n-2} \geq \frac{1}{\xi^2}.$$ 
By solving the above inequality,
$\xi = \Omega(\frac{n}{\log n})$ and Theorem \ref{thm:envy:low} holds.
\end{proof}


\paragraph{Remark.}
Given a valuation profile $v$, we say an allocation $A$ is {\em non-wasteful} if for all player $i$, 
we only allocate the resource that he values to him.
That is, $\forall i \leq j \leq n$, $\forall I \subseteq A^j_i$, $v_i(I) > 0$.
We say an algorithm $\cA$ is {\em non-wasteful} for valuation space $\cV$ if for any $v\in \cV$, $\cA(v)$ is non-wasteful.
Then if we consider the $\tau$-recallable algorithm where $\tau$ can be any constant, 
the adversary instance in the proof of Theorem \ref{thm:envy:low} still shows that 
no non-wasteful algorithm can be $o(\frac{n}{\log n})$-envy-free even for piecewise uniform valuations.
 


\section{When the Valuation Function is Uniform with Demand (UD)}
\label{sec:demand}
In this section, we study the dynamic fair division problem with uniform with demand valuations. 
Since in this model, the players do not care which part of the resource is allocated to him, 
we regard the allocation $A^{i}_{j}$ in $A=(A^{i}_{j})_{0\leq i\leq n, 0\leq j\leq i}$ as the size of the allocated resource.

For the UD problem, a simple greedy algorithm is $2$-envy-free and $2$-proportional. 
However, our main focus in this section is its performance under the more demanding fairness solution concept. 
First we consider the following special case where the smallest demand is within a constant fraction of the largest demand.
This case will provide enough intuition about solving the general case. 
In order to make our algorithm clear, 
we explicitly write out all the parameters in $\cA^{S}_{UD} (d,c,\eta)$, 
where the algorithm has those extra parameters as input. 
Here parameter $d$ is the minimum demand in the adversary instance 
and $c$ is the ratio between the largest and smallest demand. 
Note that $cd \leq 1$. 
Also, $\eta$ is an extra parameter representing the total amount of resource can be allocated to the arriving players. 
Algorithm $\cA^{S}_{UD}$ is formally defined in Algorithm \ref{alg:ud:lem}. 
The general idea of the algorithm is to treat each arriving player's demand as the maximum demand, 
and the total demand as the minimum sum. 
Then the algorithm only allocates the resource to each arriving player such that the approximation ratio is not violated. 
The performance of the algorithm is analyzed in Lemma \ref{lem:bd}.

\begin{algorithm}[htbp]
  \caption{\hspace{-3pt} 1-Recallable Algorithm $\cA^{S}_{UD} (d,c,\eta)$}
 \label{alg:ud:lem}
  \begin{algorithmic}[1]
\REQUIRE A sequence of players $N=\{1,\cdots,n\}$ arriving and departing along time. 

\STATE Initially, $A^{i}_{j}=0$ for all $0\leq i \leq n$, $0\leq j\leq i$. 

\FOR{any arriving player $i \geq 1$}
\STATE Let $j^{*} = \arg\max_{1\leq j \leq i-1} \{A_{j}^{i-1}\}$.

\STATE Set $A_{i}^{i} = A_{j^{*}}^{i} = \frac{d}{2\eta \ln 3\cdot \max\{d\cdot i, 1\}}$. \label{step:ud:lem:1}

\STATE $A^i_j = A^{i-1}_j, \forall j \in [i-1] - \{j^{*}\}$. 
\STATE $A_{0}^{i} = A^{0}_{0}- \sum_{j\in [i]} A^{i}_{j}$. 
\ENDFOR

\ENSURE Allocation $A=(A^{i}_{j})_{0 \leq i \leq n, 0 \leq j \leq i}$.
\end{algorithmic}
\end{algorithm}

\begin{lemma}\label{lem:bd} 
$\forall c, d, \eta > 0$, 
algorithm $\cA^{S}_{UD}$ is $(2c \eta \ln 3)$-fair and at most $\frac{1}{\eta}$ fraction of the resource is allocated.
\end{lemma}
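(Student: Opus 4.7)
Write $\alpha_i := \frac{d}{2\eta \ln 3 \cdot \max\{d \cdot i,\,1\}}$, so that at step $i \geq 2$ the algorithm sets both $A_i^i$ and $A_{j^*}^i$ to $\alpha_i$; note that $\alpha_i$ is non-increasing in $i$. The argument splits into a fairness part and a total-allocation part.

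For fairness, I would first show by induction on $i$ that $A_j^i \geq \alpha_i$ for every $1 \leq j \leq i$. The base case $j = i$ is immediate from Step~4. For $j < i$, player $j$'s allocation is either unchanged, in which case $A_j^i = A_j^{i-1} \geq \alpha_{i-1} \geq \alpha_i$ by monotonicity of $\alpha$ and the inductive hypothesis, or $j = j^*$ is recalled and $A_j^i = \alpha_i$ directly. Combining this lower bound with $d_j \leq c d$ (the demand-range assumption) and $\sum_{l\leq i} d_l \geq d \cdot i$ (since every $d_l \geq d$) yields
\[
v_j(A_j^i) \;=\; \min\!\left\{\tfrac{A_j^i}{d_j},\,1\right\} \;\geq\; \frac{\alpha_i}{c d} \;=\; \frac{1}{2c\eta \ln 3 \cdot \max\{d \cdot i,\,1\}} \;\geq\; \frac{1}{2c\eta \ln 3 \cdot \max\{\sum_{l\leq i} d_l,\,1\}},
\]
which is exactly $(2c\eta \ln 3)$-fairness.

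For the total allocation $S_i := \sum_{j=1}^i A_j^i$, write $M_{i-1} := \max_{j\leq i-1} A_j^{i-1}$; Step~4 gives the one-step identity $S_i = S_{i-1} + 2\alpha_i - M_{i-1}$ for $i \geq 2$. A naive ``term-by-term'' bound such as $A_j^n \leq \alpha_j$ only produces $S_n \leq \sum_i \alpha_i$, which can grow like $\Theta(\log n)/\eta$ and is too weak. The key observation is that the recalled player $j^*$ holds the maximum allocation, hence $M_{i-1} \geq S_{i-1}/(i-1)$, giving the weighted recursion
\[
(i-1)\, S_i \;\leq\; (i-2)\, S_{i-1} \;+\; 2(i-1)\,\alpha_i.
\]
Telescoping from $i = 2$ to $n$ collapses the left side to $(n-1) S_n$ and leaves $(n-1) S_n \leq \sum_{i=2}^n 2(i-1)\alpha_i$. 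A simple case split on whether $d \cdot i \leq 1$ (in which case $(i-1)\alpha_i \leq i d/(2\eta\ln 3) \leq 1/(2\eta\ln 3)$) or $d \cdot i > 1$ (in which case $(i-1)\alpha_i = (i-1)/(2\eta\ln 3\cdot i) \leq 1/(2\eta\ln 3)$) gives $(i-1)\alpha_i \leq \tfrac{1}{2\eta\ln 3}$ uniformly, so $S_n \leq \tfrac{1}{\eta \ln 3} < \tfrac{1}{\eta}$.

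The main obstacle is this total-allocation bound: any per-step estimate on $\alpha_i$ loses an extra $\Theta(\log n)$ factor from the harmonic tail $\sum 1/i$. Overcoming this requires coupling the ``max-is-at-least-average'' observation with the weighted telescoping above, which precisely cancels the harmonic growth and yields a bound independent of $n$.
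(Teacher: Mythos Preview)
Your fairness argument is essentially the paper's: an induction showing $A_j^i \geq \alpha_i$, then combining $d_j \leq cd$ with $\sum_{l\leq i} d_l \geq id$. One small point you leave implicit is that the reallocation in Step~4 is a genuine \emph{recall} (i.e., $M_{i-1} \geq \alpha_i$); this follows immediately from your induction and the monotonicity of $\alpha_i$, so it is not a real gap.

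For the total-allocation bound your route is genuinely different. The paper proceeds by an explicit three-case analysis on the regime of $i$ relative to $\lfloor 1/d \rfloor$: for $i \leq \lfloor 1/d \rfloor$ all shares equal $d/(2\eta\ln 3)$; for $\lfloor 1/d\rfloor < i \leq 2\lfloor 1/d\rfloor$ one tracks exactly which early player is recalled at each step; for $i > 2\lfloor 1/d\rfloor$ one observes that every current share was created (or last reset) in the window $[\lfloor i/2\rfloor, i]$, so the total is at most $2\sum_{l=\lfloor i/2\rfloor}^{i} \tfrac{1}{2\eta\ln 3\cdot l} \leq \tfrac{1}{\eta\ln 3}\ln(i/\lfloor i/2\rfloor) \leq \tfrac{1}{\eta}$. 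That argument exposes the combinatorial structure of the schedule and explains where the constant $\ln 3$ originates (from $i/\lfloor i/2\rfloor \leq 3$). Your approach instead couples the one-step identity $S_i = S_{i-1} + 2\alpha_i - M_{i-1}$ with the averaging bound $M_{i-1}\geq S_{i-1}/(i-1)$, telescopes $(i-1)S_i \leq (i-2)S_{i-1} + 2(i-1)\alpha_i$, and uses the uniform estimate $(i-1)\alpha_i \leq 1/(2\eta\ln 3)$. This is cleaner, avoids all case splits, and would continue to work for any non-increasing sequence $\alpha_i$ satisfying $(i-1)\alpha_i = O(1)$; the paper's argument is more hands-on but makes the recall pattern transparent.
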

\begin{proof}

We first show that algorithm $\cA^{S}_{UD}$ is well defined 
by showing that the reallocation in Step \ref{step:ud:lem:1} is always feasible
and the total allocated resource is no more than $\frac{1}{\eta}$. 
First note that $\frac{d}{2\eta \ln 3 \cdot \max\{i \cdot d, 1\}}$ is non-increasing with respect to $i$. 
Next we divide the analysis into three cases. 

Case 1: $i \leq \lfloor \frac{1}{d} \rfloor$.
When $i \leq \lfloor \frac{1}{d} \rfloor$, $A_{j}^{i} = \frac{d}{2 \eta \ln 3}$ for all $j \leq i$.
Therefore, $\sum_{j\in [i]} A_{j}^{i} = \frac{id}{2\eta \ln 3} \leq \frac{1}{2\eta \ln 3} < \frac{1}{\eta}$.
In this case, no resource is recalled and the total allocation never exceed~$\frac{1}{\eta}$. 

Case 2: $\lfloor \frac{1}{d} \rfloor < i \leq 2\lfloor \frac{1}{d} \rfloor$. 
In this case, the first $\lfloor \frac{1}{d} \rfloor$ players still satisfy Case 1, 
but when the $(\lfloor \frac{1}{d} \rfloor +l)$-th player arrives, player $l$'s allocation will be recalled, 
where $1\leq l \leq i-\lfloor \frac{1}{d} \rfloor$.
Then both player $l$ and $(\lfloor \frac{1}{d} \rfloor +l)$ will be allocated with 
$\frac{1}{2\eta \ln 3 \cdot (\lfloor \frac{1}{d} \rfloor + l)}$.
The total allocated resource is 
$$
( 2\lfloor \frac{1}{d} \rfloor - i)\frac{d}{2\eta \ln 3}  +
2 \sum_{l =1 }^{i-\lfloor \frac{1}{d} \rfloor} \frac{1}{2\eta \ln 3 \cdot (\lfloor \frac{1}{d} \rfloor + l)} \leq \frac{1}{\eta}.
$$ 
The above inequality holds by analyzing the monotonicity.

Case 3: $i > 2\lfloor \frac{1}{d} \rfloor$.
When $i > 2\lfloor \frac{1}{d} \rfloor$, all the resource allocated to the first $\lfloor \frac{i}{2}\rfloor$ players will be recalled.
In Algorithm $\cA^{S}_{UD}$, 
whenever the resource is recalled from a player, 
the algorithm will create two sets of resource with the same size, 
and allocate them to both the recalled player and the new arriving player. 
Thus the total allocated resource in this case is at most twice of the resource allocated to players from 
$\lfloor \frac{i}{2}\rfloor$ to $i$, that is, 
%
%
%
%
%
\begin{eqnarray*}
2 \sum_{l = \lfloor \frac{i}{2} \rfloor}^{i} \frac{1}{2 \eta \ln 3 \cdot l} 
\leq \frac{1}{\eta \ln 3} \ln \frac{i}{\lfloor \frac{i}{2} \rfloor} \leq \frac{1}{\eta}.
\end{eqnarray*}

Finally, we analyze the performance of $\cA^{S}_{UD}$. 
At anytime $t_i$, $1 \leq i \leq n$, 
since player $i$'s demand $d_i \leq c \cdot d$, 
his value for allocation $A^i_i$ is 
$v_i(A^i_i) \geq \frac{1}{2c\eta\ln 3 \cdot \max\{i \cdot d, 1\}}$. 
When player $i$ arrives, the total demand $\sum_{l \leq i} d_l \geq d \cdot i$, 
and the value of player $i$ at time $t_i$ is 
$$
v_i(A^i_i) \geq 
\frac{1}{2c\eta\ln 3 \cdot \max\{\sum_{l \leq i} d_l, 1\}}.
$$
Similarly, the value of any player $j < i$ satisfies the above inquality. 
Therefore, algorithm $\cA^{S}_{UD}$ is $(2c \eta \ln 3)$-fair, 
and thus finishes the proof of Lemma \ref{lem:bd}.
\end{proof}

Next we design the algorithm $\cA_{UD}$ for the valuations with arbitrary demands, 
which is formally defined in Algorithm~\ref{alg:demand}. 
The idea for this algorithm is to 
classify the players with demands that differ up to a constant factor into the same group 
and run the algorithm $\cA^S_{UD}$ for the players in each group. 
By analyzing the performance of algorithm $\cA_{UD}$, we have the following theorem. 

\begin{algorithm}[htbp]
  \caption{\hspace{-3pt} 1-Recallable Algorithm $\cA_{UD}$}
 \label{alg:demand}
  \begin{algorithmic}[1]
\REQUIRE A sequence of players $N=\{1,\cdots,n\}$ arriving and departing along time. 
Each player $i\in N$ has demand $d_{i}$. 
Let $m = \lceil \log n \rceil$.

\STATE Divide the demand space $[0,1]$ into $m +1$ sets 
	$(S_0, S_1, \dots, S_m)$, where 
	$S_{0} = [0, 2^{-m}]$, 
	$S_l = (2^{-l}, 2^{1-l}], l \in [m]$.

\FOR{player $i$ with demand $d_i$} 
	\STATE Find $j$ such that $d_i \in S_j$. 
	\STATE $A^i_i = \frac{2^{1-j}}{2\eta \ln 3 \max\{ \sum_{l \leq i} d_l, 1\}}$. 
	\STATE Let $i^* = \argmax_{i'} \{A^{i-1}_{i'} | d_{i'} \in S_j\}$. 
	\STATE $A^i_{i^*} = A^i_i$, 
		and $A^i_{i'} = A^{i-1}_{i'}, \forall i'\neq i, i^*$.
	\STATE $A^i_0 = 1-\sum_{j\in [i]} A^i_j$. 
\ENDFOR 

\ENSURE Allocation $(A^{i}_{j})_{0 \leq i \leq n, 0 \leq j \leq i}$.
\end{algorithmic}
\end{algorithm}

\begin{theorem}\label{thm:demand}
$\forall n \geq 1$, 1-recallable algorithm $\cA_{UD}$ is $O(\log n)$-fair for the UD problem with $n$ players. 
\end{theorem}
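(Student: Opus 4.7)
The plan is to view $\cA_{UD}$ as running $m+1 = O(\log n)$ parallel copies of $\cA^{S}_{UD}$, one inside each demand class $S_j$, and then to choose the internal constant $\eta$ in Algorithm \ref{alg:demand} large enough that the sum of the per-class resource consumptions still fits inside $[0,1]$. Within $S_j$ every demand lies in $(2^{-j}, 2^{1-j}]$, so the minimum demand is $d = 2^{-j}$ and the ratio between largest and smallest demand is exactly the $c = 2$ needed to feed into Lemma \ref{lem:bd}.

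First I would translate the per-slot allocation of $\cA_{UD}$ inside $S_j$ into the language of $\cA^{S}_{UD}$. Let $k_j(i)$ denote the number of players from $S_j$ that have arrived by time $t_i$. Since every such arrival contributes at least $2^{-j}$ to the global demand, $\sum_{l \leq i} d_l \geq k_j(i) \cdot 2^{-j}$, and therefore the allocation on Line 4 of Algorithm \ref{alg:demand} satisfies
\[
A^i_i \;\leq\; \frac{2^{1-j}}{2\eta \ln 3 \cdot \max\{k_j(i) \cdot 2^{-j},\, 1\}} \;=\; \frac{2^{-j}}{2(\eta/2)\ln 3 \cdot \max\{2^{-j}\cdot k_j(i),\, 1\}},
\]
which is exactly the allocation that $\cA^{S}_{UD}(2^{-j}, 2, \eta/2)$ would give to the $k_j(i)$-th arrival in its subinstance. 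Because $\cA_{UD}$ also recalls from the in-group player with the largest current holding, mirroring the rule of $\cA^{S}_{UD}$, the entire group-$j$ trajectory is dominated slot for slot by a run of $\cA^{S}_{UD}(2^{-j},2,\eta/2)$. By Lemma \ref{lem:bd} that run consumes at most $\frac{2}{\eta}$ of the resource; summing over the $m+1$ classes then bounds the total consumption of $\cA_{UD}$ by $\frac{2(m+1)}{\eta}$. Setting $\eta := 2(m+1) = \Theta(\log n)$ makes this at most $1$, which simultaneously guarantees $A^i_0 \geq 0$ for all $i$ and the physical feasibility of every recall step (since the player being recalled always holds at least twice the new per-slot value, by the doubling argument inside the proof of Lemma \ref{lem:bd}).

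Next I would establish the fairness bound. Fix a player $j$ in group $S_k$ and any time $t_i \geq t_j$, and let $r \leq i$ be the latest moment at which $j$'s allocation was (re)assigned. At that moment $A^r_j$ was set to $\frac{2^{1-k}}{2\eta\ln 3 \cdot \max\{\sum_{l\leq r} d_l,\, 1\}}$, and $A^i_j$ is identical since no recall of $j$ occurred after $r$. Because $r \leq i$ forces $\max\{\sum_{l\leq r} d_l, 1\} \leq \max\{\sum_{l\leq i} d_l, 1\}$, the normalized value satisfies
\[
v_j(A^i_j) \;\geq\; \frac{A^i_j}{d_j} \;\geq\; \frac{2^{1-k}}{d_j \cdot 2\eta\ln 3 \cdot \max\{\sum_{l\leq i} d_l,\, 1\}} \;\geq\; \frac{1}{2\eta\ln 3 \cdot \max\{\sum_{l\leq i} d_l,\, 1\}},
\]
using $d_j \leq 2^{1-k}$. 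With $\eta = 2(m+1)$ this is a $\bigl(4(m+1)\ln 3\bigr)$-fair guarantee, i.e., $O(\log n)$-fair.

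The main obstacle will be closing the gap between the \emph{global} demand sum $\sum_{l\leq i} d_l$ that appears in Line 4 of Algorithm \ref{alg:demand} and the \emph{within-group} arrival count that powered Lemma \ref{lem:bd}. The inequality $\sum_{l\leq i} d_l \geq k_j(i)\cdot 2^{-j}$ bridges it in the direction needed for feasibility, but one must also verify that this domination is strong enough that the case-by-case feasibility analysis in the proof of Lemma \ref{lem:bd} (monotonicity of the per-player allocation and doubling-style availability at each recall) transfers into each group of $\cA_{UD}$ verbatim, rather than requiring a fresh per-group re-derivation.
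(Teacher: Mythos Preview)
Your approach is essentially the paper's: both classify players into $m+1=O(\log n)$ demand classes, invoke the analysis behind Lemma~\ref{lem:bd} inside each class to bound per-class resource consumption, pick $\eta=\Theta(\log n)$ so that the sum of these consumptions fits into $[0,1]$, and then read off an $O(\eta\ln 3)$-fairness guarantee. The paper's proof is much terser, simply asserting that ``using the similar argument in Lemma~\ref{lem:bd}'' each class consumes at most $1/\eta$ and taking $\eta=1+m$; your explicit domination of the in-group trajectory by a run of $\cA^{S}_{UD}(2^{-j},2,\eta/2)$ via the inequality $\sum_{l\leq i}d_l\geq k_j(i)\cdot 2^{-j}$ makes this transfer precise, at the cost of a harmless extra factor of~$2$ in the constant. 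The ``obstacle'' you flag in your last paragraph is exactly what the paper sweeps under the phrase ``similar argument,'' and your domination inequality is the right way to close it: because the per-arrival allocation in $\cA_{UD}$ is globally nonincreasing in $i$, the argmax-recall rule behaves identically to that of $\cA^{S}_{UD}$ within each group, and the three-case analysis of Lemma~\ref{lem:bd} carries over verbatim.

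One point you do need to add: your uniform treatment with $c=2$ applies only to the classes $S_j$ with $j\in[m]$, where demands genuinely lie in $(2^{-j},2^{1-j}]$. The bottom class $S_0=[0,2^{-m}]$ has unbounded demand ratio, so Lemma~\ref{lem:bd} cannot be invoked there. The paper handles $S_0$ by a separate one-line count: every demand in $N_0$ is at most $2^{-m}\leq 1/n$, there are at most $n$ such players, and hence their total allocation is $O(1/(\eta\ln 3))$ outright. You should insert this special case; without it your sum over classes covers only $m$ of the $m+1$ groups.
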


\begin{proof}
It is clear that $\cA_{UD}$ classifies all players in $m+1$ sets. 
Let $N_j$ denote the set of players with demand in $S_j$. 
Then except the players in $N_{0}$, the demands of the players in the same set differ up to a constant factor $2$. 

First we find the parameter $\eta$ such that algorithm $\cA_{UD}$ is feasible. 
For players in $N_{0}$, since the maximum demand is at most $\frac{1}{n}$ and there are at most $n$ players in total, 
the resource allocated to those players is bounded by $\frac{1}{2\eta \ln 3}$. 
For player $i$ in $N_j$, where $j\in [m]$, 
the total demand when player $i$ arrives is at least the total demand of the players in $N_j$ who arrived before player $i$. 
Using the similar argument in Lemma \ref{lem:bd} and applying the case studies, 
the total resource allocated to players in $N_j$ is at most $\frac{1}{\eta}$. 
By setting $\eta = 1+m = 1 + \lceil \log n \rceil$, algorithm $\cA_{UD}$ is always feasible. 
Then, similar to Lemma \ref{lem:bd}, 
it is easy to get that algorithm $\cA_{UD}$ is $(4\eta \ln 3)$-fair, 
which is $O(\log n)$-fair.
Therefore, Theorem \ref{thm:demand} holds. 
\end{proof}

%

Next we construct an adversary instance for the UD problem to show that the bound of our algorithm is tight. 
Formally, we have the following theorem.  

\begin{theorem}\label{thm:low:ud}
For any 1-recallable algorithm $\cA$ for the UD problem, $\cA$ is 
$\Omega(\log n)$-fair with $n$ players.
\end{theorem}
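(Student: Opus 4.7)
The plan is to mirror the structure of the upper-bound algorithm $\cA_{UD}$: since that algorithm partitions players into $\Theta(\log n)$ geometric demand groups and spends roughly $1/\eta$ of the resource per group, I would try to show that this overhead is unavoidable by forcing any $1$-recallable algorithm to simultaneously carry $\Omega(1/\eta)$ of committed resource in each of $\Theta(\log n)$ demand scales at some point during the arrival sequence.

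Concretely, I would design an adversary with $K=\lfloor\log n\rfloor$ phases, using the same demand partition as the algorithm. In phase $k$ (for $k=0,1,\ldots,K-1$) the adversary introduces $\Theta(2^{k})$ players with demand $\approx 2^{-k}$, so that each phase contributes $\Theta(1)$ to the cumulative demand $D$ and the $n$ players are split into $K$ groups that match $N_0,\ldots,N_m$ in Algorithm \ref{alg:demand}. I would order all phase-$(k-1)$ players before phase-$k$ players, and analyze the algorithm's state right after each phase ends. At that instant the cumulative demand is $\Theta(k)$, and the fairness bound $v_j(A^i_j)\ge 1/(\eta\max\{D_i,1\})$ forces every group-$l$ player ($l\le k$) to hold an allocated length at least $d_l/(\eta k)$. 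A single-time accounting only yields total allocation $\Omega(1/\eta)$, so the fairness constraint alone is not enough; the $1$-recallable limitation must be combined with it.

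The crucial step is to exploit $1$-recallability to argue that the algorithm cannot reach a fully compressed configuration quickly. When phase $k$ begins, the allocations inherited from phase $k-1$ are, on each player, a factor of roughly $2$ larger than what phase-$k$ fairness would permit, but each arrival in phase $k$ licenses only one recall. Shrinking one phase-$(k-1)$ player from $\Theta(2^{1-k}/\eta)$ down to $\Theta(2^{-k}/\eta)$ frees only $\Theta(2^{-k}/\eta)$, and that same amount is immediately consumed by the newly arrived group-$k$ player, so no net release is available to rebalance the many other over-committed group-$(k-1)$ players. A charging/potential argument on this observation — tracking, for each phase, the "excess mass" locked into previous phases that cannot be drained within the $O(2^{k})$ arrivals of phase $k$ — should show that immediately after phase $K-1$ finishes, the algorithm simultaneously holds $\Omega(1/\eta)$ of resource in each of the $K$ phases, giving total allocation $\Omega(K/\eta)=\Omega(\log n/\eta)$. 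Since the physical resource has mass $1$, this forces $\eta=\Omega(\log n)$.

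The main obstacle I expect is precisely this cross-phase charging argument, not the construction itself. A naive use of fairness at any fixed time gives only the trivial $\Omega(1/\eta)$ lower bound, so the proof must track the evolution of per-phase allocated mass across arrivals, showing that a $1$-recallable algorithm cannot both serve the incoming phase-$k$ player's fairness obligation and shrink a phase-$(k-1)$ allocation to its phase-$k$ target using the same single recall. Making this amortized statement precise — in particular, choosing the phase sizes $\Theta(2^{k})$ so that the number of recalls in phase $k$ is exactly insufficient to clear the phase-$(k-1)$ excess before phase $k+1$ begins — is the delicate part of the plan.
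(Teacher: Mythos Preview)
Your adversary is oriented in the wrong direction, and this is not a cosmetic issue but a fatal one. You send $\Theta(2^{k})$ players in phase $k$ with demand $\approx 2^{-k}$, so phase sizes \emph{increase} and demands \emph{decrease}. The paper does the opposite: in stage $j$ it sends $n_j=\frac{n}{2\cdot 4^{j-1}}$ players with demand $\frac{8^{j}}{n}$, so stage sizes \emph{decrease} geometrically while demands \emph{increase}. The whole lower bound hinges on this orientation: because only $<\tfrac{1}{3}n_j$ players arrive after stage $j$, at most $\tfrac{1}{3}n_j$ of the stage-$j$ players can ever be recalled by a $1$-recallable algorithm, so at least $\tfrac{2}{3}n_j$ of them permanently retain the allocation they were given during stage $j$. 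A harmonic sum then shows each stage locks in $\Omega(1/\eta)$ of resource \emph{at the final time}, and summing over the $\Theta(\log n)$ stages forces $\eta=\Omega(\log n)$. No potential or charging argument is needed.

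In your instance the $1$-recallable constraint never binds: after phase $k$ there are $\Theta(2^{K})$ future arrivals, far more than the $\Theta(2^{k})$ players in phase $k$, so every earlier player can be recalled. Worse, each phase contributes at most $\Theta(1/(\eta k))$ of mass even at arrival time, since cumulative demand at phase $k$ is $\Theta(k)$ and the $2^{k}$ players each need only $2^{-k}/(\eta k)$. Concretely, the trivial $0$-recallable algorithm that gives every arriving player exactly $d_i/(\eta D_i)$ and never recalls anyone uses total mass
\[
\sum_{k=0}^{K-1}\sum_{j=1}^{2^{k}}\frac{2^{-k}}{\eta\,(k+j\,2^{-k})}\;\le\;\frac{1}{\eta}+\sum_{k=1}^{K-1}\frac{1}{\eta k}\;=\;O\!\left(\frac{\log\log n}{\eta}\right),
\]
so $\eta=O(\log\log n)$ already suffices on your instance. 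Hence no amount of cross-phase charging can extract an $\Omega(\log n)$ bound from it; the instance itself is too weak. The fix is exactly the paper's: reverse the geometric direction so that later stages have too few arrivals to disturb earlier ones.
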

\begin{proof}
Let $\cA$ be any 1-recallable algorithm which is $\eta$-fair. 
Here we consider the following adversary instance.
In order to make the instance clear, we describe the arriving players by $m=\frac{\log n}{3}$ rounds 
($n$ is a large enough integer such that $m$ is also a large enough integer).
Formally, within stage $j \leq m$, $n_{j} = \frac{n}{2\times 4^{j-1}}$ players arrive one by one and each of them has demand $\frac{8^j}{n}$.
That is, in the first round, $\frac{n}{2}$ players arrive one by one and each of them has demand $\frac{8}{n}$.
For each of the following rounds, the number of the arriving players decreases by a multiplicative factor of $4$, 
but the demand of each player increases by a multiplicative factor of $8$. 
Note that in the last round the demand of each player is $\frac{8^m}{n}=1$. 
Also, the total number of players in the designed instance is 
$\sum_{j\leq m} n_j < n$.
Thus this instance is well defined, 
and it is sufficient to prove that all 1-recallable algorithm is $\Omega(\log n)$-fair for this instance. 

In each stage $j\leq m$, the total number of players that would arrive in the future is less than 
$$u = (\frac{n}{2\times 4^j}) \big/ ({1-\frac{1}{4}}) = \frac{n}{6\times 4^{j-1}}.$$
It is easy to see that $u=\frac{1}{3}n_{j}$.
Therefore, at least $\frac{2}{3}n_{j}=\frac{n}{3\times 4^{j-1}}$ players 
who arrive during the $j$th round won't get recalled by in the future. 
Then we can lower bound the resource allocated to those players. 
First note that the total demand before stage $j$ is 
$\sum_{j' < j} \frac{n}{2\times 4^{j'-1}} \times \frac{8^{j'}}{n} 
= \sum_{j' < j} \frac{8^{j'}}{2\times 4^{j'-1}}$. 
The total demand when the $i$th player in stage $j$ arrives is 
$$
D_i = \sum_{j' < j} \frac{8^{j'}}{2\times 4^{j'-1}} 
+ \frac{8^j \cdot i}{n}.
$$

Since algorithm $\cA$ is an 1-recallable algorithm, 
for any integer $x \leq \frac{n_j + 1}{2}$, 
there exists at least $n_j+1-2x$ players whose allocated resource is at least $\frac{8^j}{n\eta \cdot D_{n_j-x}}$. 
Thus the total resource allocated to the players in stage $j$ is 
\begin{eqnarray*}
&& L_j \geq \frac{2}{\eta} \sum_{i = 1}^{\frac{n}{6\times 4^{j-1}}} 
\frac{8^j}{n\cdot D_{n_j-i}}\\
&=& \frac{2}{\eta} \sum_{i = 1}^{\frac{n}{6\times 4^{j-1}}} 
\frac{\frac{8^j}{n}}
{ \sum_{j' < j} \frac{8^{j'}}{2\times 4^{j'-1}} 
+ \frac{8^j}{2\times 4^{j-1}} - \frac{8^j \cdot i}{n}} \\
&=& \frac{2}{\eta} \sum_{i = 1}^{\frac{n}{6\times 4^{j-1}}} 
\frac{\frac{8^j}{n}}
{2^{j+2} - 2 - \frac{8^j \cdot i}{n}} \\
&\geq& \frac{2}{\eta} \sum_{i = 1}^{\frac{n}{6\times 4^{j-1}}} 
\frac{\frac{8^j}{n}}
{2^{j+2} - \frac{8^j \cdot i}{n}} 
= \frac{2}{\eta} \sum_{i = 1}^{\frac{n}{6\times 4^{j-1}}} 
\frac{1}
{\frac{4n}{4^j} - i} \\
&\to& \frac{2}{\eta} \left[\ln \frac{4n}{4^j} - \ln (\frac{4n}{4^j} - \frac{n}{6\times 4^{j-1}}) \right]
= \frac{2 \ln \frac{6}{5}}{\eta},
\end{eqnarray*} 
when we set $n \to \infty$. 
Thus the total allocated resource is 
$$
1 \geq \sum_{j \leq \frac{\log n}{3}} L_j \geq \frac{2 \ln \frac{6}{5} \log n}{3\eta}.
$$
Therefore, $\eta \geq \frac{2}{3} \ln \frac{6}{5} \log n$ and Theorem \ref{thm:low:ud} holds.
\end{proof}

\paragraph{Remark} If we consider $\tau$-recallable algorithms here, where $\tau$ is a constant, 
by designing the similar adversary instance, 
except that the number of players and the demand change in the ratio of $\frac{1}{4\tau}$ and $8\tau$ respectively between each stage, 
we can still obtain the $\Omega(\log n)$ lower bound.

\section{Conclusion and Future Direction}

In this paper, we study the dynamic fair division problem with a divisible resource. 
We design $O(\log n)$-proportional and $O(n)$-envy-free algorithms for players with general valuations.
We also show that those bounds are tight up to a logarithmic factor.
However, for the valuations that are uniform with demand, 
we design an $O(\log n)$-fair algorithm and we prove that this ratio is tight.

There are a lot of future directions that are worth exploring. 
An immediate open question is whether we can find algorithms with better approximation ratio for $\tau$-recallable algorithms without the non-wasteful assumption when $\tau \geq 2$.  
Another particularly interesting question is to consider the problem via a game theoretic view in the dynamic setting.
Moreover, the dynamic version of allocating multiple indivisible resources is also worthy of the effort. 

%

\bibliographystyle{abbrv}
\bibliography{ref}

\end{document}